\newcolumntype{+}{>{\global\let\currentrowstyle\relax}}
\newcolumntype{^}{>{\currentrowstyle}}
\newcommand{\otoprule}{\midrule[\heavyrulewidth]}
\DeclareMathOperator{\Hom}{Hom}
\DeclareMathOperator{\Com}{Com}
\DeclareMathOperator{\1}{id}
\newcommand{\NNN}{\mathbb{N}}
\newcommand{\RR}{\mathbb{R}}
\newcommand{\EEnd}{\mathcal End}
\newcommand{\EE}{\mathcal E}
\newcommand{\De}{\Delta}
\renewcommand{\=}{:=}
\renewcommand{\t}{\otimes}
\renewcommand{\:}{\colon}
\newcommand{\m}{\overset{\circ}{\mu}}
\newcommand{\A}{\hat{A}}
\newcommand{\pp}{\hat{p}}
\newcommand{\ppp}{p_0}
\newcommand{\q}{\hat{q}}
\newcommand{\muu}{\hat{\mu}}
\newcommand{\qxi}{\hat{\xi}}
\newcommand{\Q}{\hat{A}_-}
\newcommand{\PP}{\hat{A}_+}
\newtheorem{thm}{Theorem}[section]
\newtheorem{lemma}[thm]{Lemma}
\theoremstyle{definition}
\newtheorem{defn}[thm]{Definition}
\newtheorem{rem}[thm]{Remark}
\begin{document}

\title{\bf Jacobi operators of quantum counterparts\\ of three-dimensional  
real Lie algebras \\ over the harmonic oscillator}

\author{\Large Eugen Paal and J\"{u}ri Virkepu}
\date{}

\maketitle

\begin{abstract}
Operadic Lax representations for the harmonic oscillator are used to
construct the quantum counterparts of three-dimensional real 
Lie algebras.  The Jacobi operators of these quantum algebras are
explicitly calculated.
\end{abstract}

\section{Introduction and outline of the paper}

In Hamiltonian formalism, a mechanical system is described by the
canonical variables $q^i,p_i$ and their time evolution is prescribed
by the Hamiltonian equations
\begin{equation}
\label{ham} 
\dfrac{dq^i}{dt}=\dfrac{\partial H}{\partial p_i}, \quad
\dfrac{dp_i}{dt}=-\dfrac{\partial H}{\partial q^i}
\end{equation}
By a Lax representation \cite{Lax68} of a mechanical system
one means such a pair $(L,M)$ of matrices (linear operators) $L,M$
that the above Hamiltonian system may be represented as the Lax
equation
\begin{equation}
\label{lax} 
\dfrac{dL}{dt}= ML-LM
\end{equation}
Thus, from the algebraic point of view, mechanical systems may be
represented by linear operators, i.e by  linear maps $V\to V$  of a
vector space $V$. In particular, representation of the physical observables by linear operators is used in quantum mechanics by and the Heisenberg equations. As a generalization of this one can pose the
following question \cite{Paal07}: how can the time evolution of the
linear operations (multiplications) $V^{\t n}\to V$ be described?

The algebraic operations (multiplications) can be seen as an example
of the \emph{operadic} variables \cite{Ger}. If an operadic system
depends on time one can speak about \emph{operadic dynamics}
\cite{Paal07}. The latter may be introduced by simple and natural
analogy with the Hamiltonian dynamics. In particular, the time
evolution of the operadic variables may be given by the operadic Lax
equation. In \cite{PV07,PV08,PV08-1}, the low-dimensional binary operadic Lax representations for the harmonic oscillator were constructed.
In \cite{PV08-2} it was shown how the operadic Lax representations are related to the conservation of energy.

In \cite{PV09-1}, the operadic Lax representations were used to
constuct the quantum counterparts of the real three-dimensional Lie
algebras in Bianchi classification over the harmonic oscillator. In
this paper,  the Jacobi operators of some of these quantum algebras
are explicitly calculated.

\section{Endomorphism operad and Gerstenhaber brackets}

Let $K$ be a unital associative commutative ring, $V$ be a unital
$K$-module, and $\EE_V^n\= {\EEnd}_V^n\= \Hom(V^{\t n},V)$
($n\in\NNN$). For an \emph{operation} $f\in\EE^n_V$, we refer to $n$
as the \emph{degree} of $f$ and often write (when it does not cause
confusion) $f$ instead of $\deg f$. For example, $(-1)^f\= (-1)^n$,
$\EE^f_V\=\EE^n_V$ and $\circ_f\= \circ_n$. Also, it is convenient
to use the \emph{reduced} degree $|f|\= n-1$. Throughout this paper,
we assume that $\t\= \t_K$.

\begin{defn}[endomorphism operad \cite{Ger}]
\label{HG} For $f\t g\in\EE_V^f\t\EE_V^g$ define the \emph{partial
compositions}
\[
f\circ_i g\= (-1)^{i|g|}f\circ(\1_V^{\t i}\t g\t\1_V^{\t(|f|-i)})
\quad \in\EE^{f+|g|}_V,
         \quad 0\leq i\leq |f|
\]
The sequence $\EE_V\= \{\EE_V^n\}_{n\in\NNN}$, equipped with the
partial compositions $\circ_i$, is called the \emph{endomorphism
operad} of $V$.
\end{defn}

\begin{defn}[total composition \cite{Ger}]
The \emph{total composition}
$\circ \:\EE^f_V\t\EE^g_V\to\EE^{f+|g|}_V$ is defined by
\[
f\circ g\= \sum_{i=0}^{|f|}f\circ_i g\quad \in \EE_V^{f+|g|}, \quad |\circ|=0
\]
The pair $\Com\EE_V\= \{\EE_V,\circ\}$ is called the \emph{composition
algebra} of $\EE_V$.
\end{defn}

\begin{defn}[Gerstenhaber brackets \cite{Ger}]
The  \emph{Gerstenhaber brackets} $[\cdot,\cdot]$ are defined in
$\Com\EE_V$ as a graded commutator by
\[
[f,g]\= f\circ g-(-1)^{|f||g|}g\circ f=-(-1)^{|f||g|}[g,f],\quad
|[\cdot,\cdot]|=0
\]
\end{defn}

The \emph{commutator algebra} of $\Com \EE_V$ is denoted as
$\Com^{-}\!\EE_V\= \{\EE_V,[\cdot,\cdot]\}$. One can prove (e.g
\cite{Ger}) that $\Com^-\!\EE_V$ is a \emph{graded Lie algebra}. The
Jacobi identity reads
\[
(-1)^{|f||h|}[f,[g,h]]+(-1)^{|g||f|}[g,[h,f]]+(-1)^{|h||g|}[h,[f,g]]=0
\]

\section{Operadic Lax pair}

Assume that $K\= \RR$ or $K\= \mathbb{C}$ and operations are
differentiable. Dynamics in operadic systems (operadic dynamics) may
be introduced by

\begin{defn}[operadic Lax pair \cite{Paal07}]
Allow a classical dynamical system to be described by the
Hamiltonian system \eqref{ham}. An \emph{operadic Lax pair} is a
pair $(\mu,M)$ of homogeneous operations $\mu,M\in\EE_V$, such that the
Hamiltonian system  (\ref{ham}) may be represented as the
\emph{operadic Lax equation}
\[
\frac{d\mu}{dt}=[M,\mu]\= M\circ\mu-(-1)^{|M||\mu|}\mu\circ M
\]
The pair $(L,M)$ is also called an \emph{operadic Lax representations} of/for Hamiltonian system \eqref{ham}.
\end{defn}

\begin{rem}
Evidently, the degree constraints $|M|=|L|=0$ give rise to ordinary
Lax equation (\ref{lax}) \cite{Lax68}. In this paper we assume that $|M|=0$.
\end{rem}

The Hamiltonian of the harmonic oscillator (HO) is
\[
H(q,p)=\frac{1}{2}(p^2+\omega^2 q^2)
\]
Thus, the Hamiltonian system of HO reads
\begin{equation}
\label{eq:h-osc} \frac{dq}{dt}=\frac{\partial H}{\partial p}=p,\quad
\frac{dp}{dt}=-\frac{\partial H}{\partial q}=-\omega^2q
\end{equation}
If $\mu$ is a linear algebraic operation we can use the above
Hamilton equations to obtain
\[
\dfrac{d\mu}{dt} =\dfrac{\partial\mu}{\partial
q}\dfrac{dq}{dt}+\dfrac{\partial\mu}{\partial p}\dfrac{dp}{dt}
=p\dfrac{\partial\mu}{\partial
q}-\omega^2q\dfrac{\partial\mu}{\partial p}
 =[M,\mu]
\]
Therefore, we get the following linear partial differential equation
for $\mu(q,p)$:
\begin{equation}
\label{eq:diff}
p\dfrac{\partial\mu}{\partial
q}-\omega^2q\dfrac{\partial\mu}{\partial p}=[M,\mu]
\end{equation}
By integrating \eqref{eq:diff} one can get collections of operations called \cite{Paal07} the \emph{operadic} (Lax representations for/of) harmonic oscillator.

\section{3D binary anti-commutative operadic Lax representations for harmonic oscillator}

\begin{lemma}
\label{lemma:harmonic3} 
Matrices
\[
L\=\begin{pmatrix}
    p & \omega q & 0 \\
    \omega q & -p & 0 \\
    0 & 0 & 1 \\
  \end{pmatrix},\quad
M\=\frac{\omega}{2}
\begin{pmatrix}
    0 & -1 &0\\
1 & 0 & 0\\
0 & 0 & 0
  \end{pmatrix}
\]
represent a 3D Lax representation for the harmonic
oscillator.
\end{lemma}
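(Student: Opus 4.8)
The plan is to verify the Lax equation \eqref{lax} directly. Since both $L$ and $M$ are genuine matrices, i.e.\ homogeneous operations of reduced degree $|L|=|M|=0$, the operadic Lax equation collapses to the ordinary one
\[
\frac{dL}{dt}=[M,L]=ML-LM,
\]
with the Gerstenhaber bracket reducing to the usual matrix commutator. It therefore suffices to show that the time derivative of $L$, dictated by the harmonic-oscillator dynamics, coincides with $ML-LM$.

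First I would compute the left-hand side. The entries of $L$ depend on time only through the canonical variables $q$ and $p$, so by the chain rule together with the Hamilton equations \eqref{eq:h-osc}, namely $dq/dt=p$ and $dp/dt=-\omega^2 q$, one differentiates $L$ entrywise. This replaces $p$ by $-\omega^2 q$ and $\omega q$ by $\omega p$ in the upper-left $2\times 2$ block, while the constant third row and column (and the entry $1$) contribute nothing, yielding
\[
\frac{dL}{dt}=\begin{pmatrix} -\omega^2 q & \omega p & 0 \\ \omega p & \omega^2 q & 0 \\ 0 & 0 & 0 \end{pmatrix}.
\]

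Next I would compute the right-hand side by carrying out the two matrix products $ML$ and $LM$ explicitly and subtracting. Because $M$ has vanishing third row and column, both products preserve the block form, so the computation reduces to a $2\times 2$ commutator scaled by $\omega/2$. Finally I would compare the resulting matrix with $dL/dt$ entry by entry; agreement in all nine entries establishes the Lax equation and hence the lemma.

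The argument is a direct verification with no conceptual obstruction; the only thing to watch is careful bookkeeping in the two matrix multiplications and keeping the sign from $dp/dt=-\omega^2 q$ consistent throughout. The essential mechanism is that the off-diagonal generator $M$ rotates the $2\times 2$ block of $L$ at exactly the rate prescribed by the oscillator frequency, so that the commutator $[M,L]$ reproduces the Hamiltonian time evolution of $L$.
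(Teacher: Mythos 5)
Your proposal is correct: the entrywise differentiation gives $\tfrac{dL}{dt}=\bigl(\begin{smallmatrix}-\omega^2q&\omega p&0\\ \omega p&\omega^2q&0\\ 0&0&0\end{smallmatrix}\bigr)$, and the $2\times2$ commutator computation scaled by $\omega/2$ reproduces exactly this matrix, so the Lax equation holds. The paper states this lemma without proof, and your direct verification is precisely the standard argument one would supply; the only cosmetic addition worth making is to observe that, conversely, the $(1,1)$ and $(1,2)$ entries of the matrix identity $\tfrac{dL}{dt}=ML-LM$ read back as $\dot p=-\omega^2q$ and $\dot q=p$, so the Lax equation is genuinely equivalent to (not merely implied by) the Hamiltonian system, which is what ``representation'' requires.
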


\begin{defn}[quasi-canonical coordinates]
For the harmonic oscillator define its  \emph{quasi-canonical coordinates} $A_\pm$ by
\begin{equation}
\label{eq:def_A}
A^{2}_+ - A^{2}_- =2p,\quad
A_+ A_-=\omega q
\end{equation}
\end{defn}

\begin{thm}[see \cite{PV08-1}]
\label{thm:main}
Let $C_{\nu}\in\mathbb{R}$ ($\nu=1,\ldots,9$) be
arbitrary real--valued parameters, such that
\begin{equation}
\label{eq:cond} C_2^2+C_3^2+C_5^2+C_6^2+C_7^2+C_8^2\neq0
\end{equation}
Let $M$ be defined as in Lemma \ref{lemma:harmonic3} and $\mu:
V\otimes V\to V$ be an anti-commutative binary operation in a 3D real
vector space $V$ with the structure constants (functions)
\begin{equation}
\label{eq:theorem}
\begin{cases}
\mu_{11}^{1}=\mu_{22}^{1}=\mu_{33}^{1}=\mu_{11}^{2}=\mu_{22}^{2}=\mu_{33}^{2}=\mu_{11}^{3}=\mu_{22}^{3}=\mu_{33}^{3}=0\\
\mu_{23}^{1}=-\mu_{32}^{1}=C_2p-C_3\omega q-C_4\\
\mu_{13}^{2}=-\mu_{31}^{2}=C_2p-C_3\omega q+C_4\\
\mu_{31}^{1}=-\mu_{13}^{1}=C_2\omega q+C_3p-C_1\\
\mu_{23}^{2}=-\mu_{32}^{2}=C_2\omega q+C_3p+C_1\\
\mu_{12}^{1}=-\mu_{21}^{1}=C_5 A_+ + C_6 A_-\\
\mu_{12}^{2}=-\mu_{21}^{2}=C_5 A_- - C_6 A_+\\
\mu_{13}^{3}=-\mu_{31}^{3}=C_7 A_+ + C_8 A_-\\
\mu_{23}^{3}=-\mu_{32}^{3}=C_7 A_- - C_8 A_+\\
\mu_{12}^{3}=-\mu_{21}^{3}=C_9
\end{cases}
\end{equation}
Then $(\mu,M)$ is an operadic Lax pair for the harmonic oscillator.
\end{thm}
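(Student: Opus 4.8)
The plan is to verify the operadic Lax equation directly in components. Since $|M|=0$ and $|\mu|=1$, the partial compositions of Definition~\ref{HG} collapse to
\[
M\circ\mu = M\circ_0\mu, \qquad \mu\circ M = \mu\circ_0 M + \mu\circ_1 M,
\]
all carrying trivial sign, so that on $V\t V$ the Gerstenhaber bracket reduces to the familiar derivation-type expression
\[
[M,\mu](x,y) = M\big(\mu(x,y)\big) - \mu(Mx,y) - \mu(x,My).
\]
Writing $\mu(e_j,e_k)=\sum_i\mu_{jk}^i e_i$ and $Me_j=\sum_i M_j^i e_i$, this reads
\[
[M,\mu]_{jk}^i = \sum_l\big( M_l^i\,\mu_{jk}^l - \mu_{lk}^i\,M_j^l - \mu_{jl}^i\,M_k^l \big).
\]
By the computation leading to \eqref{eq:diff}, it then suffices to check, for every triple $(i,j,k)$, that the Hamiltonian time-derivative of $\mu_{jk}^i$ along the flow \eqref{eq:h-osc} equals $[M,\mu]_{jk}^i$.

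First I would record the flow derivatives of all quantities appearing in \eqref{eq:theorem}. From \eqref{eq:h-osc} one has $\dot q=p$ and $\dot p=-\omega^2 q$, so the linear forms $C_2p-C_3\omega q$ and $C_2\omega q+C_3p$ differentiate into one another (up to the factor $\omega$), while the constants $C_1,C_4,C_9$ are annihilated. The only genuinely implicit quantities are $A_\pm$: differentiating the defining relations \eqref{eq:def_A} along the flow gives the linear system $A_+\dot A_+-A_-\dot A_-=-\omega^2 q$ and $\dot A_+A_-+A_+\dot A_-=\omega p$, whose unique solution is
\[
\dot A_+ = -\tfrac{\omega}{2}A_-, \qquad \dot A_- = \tfrac{\omega}{2}A_+.
\]
This is the key step, and establishing it cleanly is the one place where care is needed, since it requires inverting the $2\times2$ system and using $A_+^2-A_-^2=2p$, $A_+A_-=\omega q$ to simplify.

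With these derivatives in hand the remainder is a bookkeeping verification exploiting the simple shape of $M$: it rotates the plane spanned by $e_1,e_2$ with angular velocity $\omega/2$ and annihilates $e_3$, so $M_1^2=\tfrac{\omega}{2}$, $M_2^1=-\tfrac{\omega}{2}$ and all remaining entries vanish. Anti-commutativity forces the nine diagonal constants $\mu_{jj}^i$ to vanish on both sides and lets me restrict to the independent components $\mu_{12}^i$, $\mu_{13}^i$, $\mu_{23}^i$. For the $A_\pm$-families a single bracket term survives and matches immediately: e.g.\ $[M,\mu]_{12}^1=-\tfrac{\omega}{2}\mu_{12}^2$, equal to $\tfrac{d}{dt}(C_5A_++C_6A_-)=-\tfrac{\omega}{2}(C_5A_--C_6A_+)$ precisely because $A_\pm$ rotate at the same velocity $\omega/2$ as $M$. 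For the $(p,q)$-families two bracket contributions combine into a net factor $\omega$ — e.g.\ $[M,\mu]_{23}^1=\tfrac{\omega}{2}(\mu_{13}^1-\mu_{23}^2)=-\omega(C_2\omega q+C_3p)$, where the additive constant $C_1$ cancels — matching $\tfrac{d}{dt}\mu_{23}^1$. I expect the main obstacle to be purely organizational — tracking all nine families and the interleaving of the $(p,q)$ and $(A_+,A_-)$ pairs — rather than conceptual, since each individual match is short.
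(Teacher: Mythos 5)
The paper does not actually prove Theorem~\ref{thm:main}; it is imported by citation from \cite{PV08-1}, where the argument is precisely the kind of component-by-component verification you outline. Your setup is correct: with $|M|=0$ and $|\mu|=1$ all the signs in Definition~\ref{HG} and in the graded commutator are $+1$, so $[M,\mu](x,y)=M(\mu(x,y))-\mu(Mx,y)-\mu(x,My)$, and the componentwise bracket $[M,\mu]_{jk}^i=M_l^i\mu_{jk}^l-M_j^l\mu_{lk}^i-M_k^l\mu_{jl}^i$ is the right thing to match against $\dot\mu_{jk}^i$ computed via \eqref{eq:diff}. The key derivative $\dot A_+=-\tfrac{\omega}{2}A_-$, $\dot A_-=\tfrac{\omega}{2}A_+$ does follow from differentiating \eqref{eq:def_A}: the resulting $2\times2$ system has determinant $A_+^2+A_-^2=2\sqrt{2H}\neq0$, and your candidate solves it. I spot-checked all nine independent families ($\mu_{12}^i,\mu_{13}^i,\mu_{23}^i$) and each matches; in particular your sample identities $[M,\mu]_{12}^1=-\tfrac{\omega}{2}\mu_{12}^2$ and $[M,\mu]_{23}^1=\tfrac{\omega}{2}(\mu_{13}^1-\mu_{23}^2)=-\omega(C_2\omega q+C_3p)$ are correct.

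One genuine gap: you never use the hypothesis \eqref{eq:cond}, and your argument only establishes the implication ``Hamiltonian equations $\Rightarrow$ operadic Lax equation.'' The definition of an operadic Lax pair asks that the Hamiltonian system \emph{may be represented as} the Lax equation, i.e.\ an equivalence; if all of $C_2,C_3,C_5,C_6,C_7,C_8$ vanished, $\mu$ would be constant and the Lax equation would hold for an arbitrary trajectory, carrying no dynamical content. The condition \eqref{eq:cond} is exactly what makes the converse work: e.g.\ if $C_2^2+C_3^2\neq0$, the two relations $\tfrac{d}{dt}\mu_{23}^1=[M,\mu]_{23}^1$ and $\tfrac{d}{dt}\mu_{31}^1=[M,\mu]_{31}^1$ form a linear system in $(\dot q,\dot p)$ with determinant proportional to $C_2^2+C_3^2$, forcing $\dot q=p$, $\dot p=-\omega^2q$; similar inversions handle the cases where only the $A_\pm$-coefficients are nonzero. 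You should add this converse step (or at least state explicitly that you are proving only one direction) to account for the hypothesis.
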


\section{Initial conditions}

Now specify the coefficients $C_{\nu}$ in Theorem \ref{thm:main} by the
initial conditions
\[
\left. \mu\right|_{t=0}=\m{}_,\quad
\left.p\right|_{t=0}
=p_0,\quad \left. q\right|_{t=0}=0
\]
Denoting $E\=H|_{t=0}$, the latter together with \eqref{eq:def_A} yield the initial
conditions for $A_\pm$:
\[
\begin{cases}
\left.\left(A^{2}_+ + A^{2}_-\right)\right|_{t=0}=2\sqrt{2E}\\
\left.\left(A^{2}_+ - A^{2}_- \right)\right|_{t=0}=2p_0\\
\left.A_+ A_-\right|_{t=0}=0
\end{cases}
\quad \Longleftrightarrow \quad
\begin{cases}
\ppp\!\!>0\\
\left.A^{2}_+\right|_{t=0}=2p_0\\
\left.A_- \right|_{t=0}=0
\end{cases}
\vee\quad
\begin{cases}
\ppp<0\\
\left.A_+\right|_{t=0}=0\\
\left.A_-^2\right|_{t=0}=-2p_0
\end{cases}
\]
In what follows assume that $p_0>0$ and $A_+|_{t=0}=\sqrt{2p_0}$. The other cases
can be treated similarly. Note that in this case $p_0=\sqrt{2E}$. From \eqref{eq:theorem} we get the following linear system:
\begin{equation}
\label{eq:constants}
\left\{
  \begin{array}{lll}
C_1=\frac{1}{2}\left(\overset{\circ}{\mu}{}_{23}^{2}-\overset{\circ}{\mu}{}_{31}^{1}\right),&
C_2=\frac{1}{2\ppp}\left(\overset{\circ}{\mu}{}_{13}^{2}+\overset{\circ}{\mu}{}_{23}^{1}\right),&
C_3=\frac{1}{2\ppp}\left(\overset{\circ}{\mu}{}_{23}^{2}+\overset{\circ}{\mu}{}_{31}^{1}\right)\vspace{1mm}\\
C_4=\frac{1}{2}\left(\overset{\circ}{\mu}{}_{13}^{2}-\overset{\circ}{\mu}{}_{23}^{1}\right),&
C_5=\frac{1}{\sqrt{2\ppp}}\overset{\circ}{\mu}{}_{12}^{1},&
C_6=-\frac{1}{\sqrt{2\ppp}}\overset{\circ}{\mu}{}_{12}^{2}\vspace{1mm}\\
C_7=\frac{1}{\sqrt{2\ppp}}\overset{\circ}{\mu}{}_{13}^{3},&
C_8=-\frac{1}{\sqrt{2\ppp}}\overset{\circ}{\mu}{}_{23}^{3},&
C_9=\overset{\circ}{\mu}{}_{12}^{3}
\end{array}
\right.
\end{equation}

\section{Bianchi classification of 3D real Lie algebras}
We use the Bianchi classification of 3D real Lie
algebras \cite{Landau80}. The structure equations of the latter can
be presented as follows:
\[
[e_1,e_2]=-\alpha e_2+n^{3}e_3,\quad
[e_2,e_3]=n^{1}e_1,\quad
[e_3,e_1]=n^{2}e_2+\alpha e_3
\]
The values of the parameters $\alpha,n^{1}, n^{2},n^{3}$ and the
corresponding structure constants are presented in Table
\ref{table:Bianchi1}.

\begin{table}[htpb]
\newcolumntype{Y}{>{\centering\arraybackslash}X}
\renewcommand{\tabularxcolumn}[1]{>{\arraybackslash}m{#1}}
\begin{tabularx}{\textwidth}{{>{\hsize=1.65\hsize}Y>{\hsize=0.30\hsize}Y>{\hsize=2.4\hsize}Y>{\hsize=0.85\hsize}Y>{\hsize=0.85\hsize}Y
>{\hsize=0.85\hsize}Y>{\hsize=0.85\hsize}Y>{\hsize=0.85\hsize}Y>{\hsize=0.85\hsize}Y>{\hsize=0.85\hsize}Y>{\hsize=0.85\hsize}Y>{\hsize=0.85\hsize}Y}}
\toprule%
Bianchi type & $\alpha$ & $(n^{1},n^{2},n^{3})$ &
$\overset{\circ}{\mu}{}_{12}^{1}$ &
$\overset{\circ}{\mu}{}_{12}^{2}$ &
$\overset{\circ}{\mu}{}_{12}^{3}$ &
 $\overset{\circ}{\mu}{}_{23}^{1}$ & $\overset{\circ}{\mu}{}_{23}^{2}$ & $\overset{\circ}{\mu}{}_{23}^{3}$
  & $\overset{\circ}{\mu}{}_{31}^{1}$ & $\overset{\circ}{\mu}{}_{31}^{2}$ &
  $\overset{\circ}{\mu}{}_{31}^{3}$\\\otoprule
 I & 0 & $(0,0,0)$ & 0 & 0 & 0 & 0 & 0 & 0 & 0 & 0 & 0
\\\midrule
II & 0 & $(1,0,0)$ & 0 & 0 & 0 & $1$ & 0 & 0 & 0 & 0 & 0
\\\midrule
VII & 0 & $(1,1,0)$ & 0 & 0 & 0 & $1$ & 0 & 0 & 0 & $1$ & 0
\\\midrule
VI & 0 & $(1,-1,0)$ & 0 & 0 & 0 & $1$ & 0 & 0 & 0 & $-1$ & 0
\\\midrule
IX & 0 & $(1,1,1)$ & 0 & 0 & $1$ & $1$ & 0 & 0 & 0 & $1$ & 0
\\\midrule
VIII & 0 & $(1,1,-1)$ & 0 & 0 & $-1$ & $1$ & 0 & 0 & 0 & $1$ & 0
\\\midrule
V & 1 & $(0,0,0)$ & 0 & $-1$ & 0 & 0 & 0 & 0 & 0 & 0 & $1$
\\\midrule
IV & 1 & $(0,0,1)$ & 0 & $-1$ & $1$ & 0 & 0 & 0 & 0 & 0 & $1$
\\\midrule
VII$_{a}$ & $a$ & $(0,1,1)$ & 0 & $-a$ & $1$ & 0 & 0 & 0 & 0 & $1$ &
$a$
\\\midrule
III$_{a=1}$& 1 & $(0,1,-1)$ & 0 & $-1$ & $-1$ & 0 & 0 & 0 & 0 & $1$
& $1$
\\\midrule
VI$_{a\neq 1}$& $a$ & $(0,1,-1)$ & 0 & $-a$ & $-1$ & 0 & 0 & 0 & 0 &
$1$ & $a$
\\\bottomrule
\end{tabularx}
\caption{3D real Lie algebras in Bianchi classification. Here $a>0$}
\label{table:Bianchi1}
\end{table}

\section{Dynamical deformations of 3D real Lie algebras}

By using the structure constants of the 3D real Lie
algebras in the Bianchi classification, Theorem \ref{thm:main} and
relations \eqref{eq:constants} one  can propose that the time evolution of the
3D real Lie algebras is prescribed \cite{PV08-2} as
given in Table \ref{table:Bianchi3}.
\begin{table}[htpb]
\newcolumntype{Y}{>{\centering\arraybackslash}X}
\renewcommand{\tabularxcolumn}[1]{>{\arraybackslash}m{#1}}
\begin{tabularx}{\textwidth}{{>{\hsize=2\hsize}Y>{\hsize=0.75\hsize}Y>{\hsize=0.9\hsize}Y>{\hsize=0.5\hsize}Y>{\hsize=0.85\hsize}Y
>{\hsize=0.8\hsize}Y>{\hsize=0.85\hsize}Y>{\hsize=0.90\hsize}Y>{\hsize=0.8\hsize}Y>{\hsize=0.7\hsize}Y>{\hsize=0.95\hsize}Y}}
\toprule%
Dynamical Bianchi type & $\mu_{12}^{1}$ & $\mu_{12}^{2}$ &
$\mu_{12}^{3}$ & $\mu_{23}^{1}$ & $\mu_{23}^{2}$ & $\mu_{23}^{3}$ &
$\mu_{31}^{1}$ & $\mu_{31}^{2}$ &  $\mu_{31}^{3}$
\\\otoprule
I$^{t}$ & 0 & 0 & 0 & 0 & 0 & 0 & 0 & 0 & 0
\\  \midrule
II$^{t}$ & 0 & 0 & 0 & $\frac{p+p_0}{2p_0}$ & $\frac{\omega
q}{2p_0}$ & 0 & $\frac{\omega q}{2p_0}$ & $\frac{p-p_0}{-2p_0}$ & 0
\\ \midrule
VII$^{t}$ & 0 & 0 & 0 & $1$ & 0 & 0 & 0 & $1$ & 0
\\ \midrule
VI$^{t}$ & 0 & 0 & 0 & $\frac{p}{p_0}$& $\frac{\omega q}{p_0}$ & 0 &
$\frac{\omega q}{p_0}$ & $-\frac{p}{p_0}$ & 0
\\\midrule
IX$^{t}$ & 0 & 0 & $1$ & $1$ & 0 & 0 & 0 & $1$ & 0
\\  \midrule
VIII$^{t}$ & 0 & 0 & $-1$ & $1$ & 0 & 0 & 0 & $1$ & 0
\\\midrule
V$^{t}$ & $\frac{A_-}{\sqrt{2p_0}}$ & $\frac{-A_+}{\sqrt{2p_0}}$ & 0
& 0 & 0 & $\frac{-A_-}{\sqrt{2p_0}}$ & 0 & 0 &
$\frac{A_+}{\sqrt{2p_0}}$
\\\midrule
IV$^{t}$ & $\frac{A_-}{\sqrt{2p_0}}$ & $\frac{-A_+}{\sqrt{2p_0}}$ &
$1$ & 0 & 0 & $\frac{-A_-}{\sqrt{2p_0}}$ & 0 & 0 &
$\frac{A_+}{\sqrt{2p_0}}$
\\\midrule
VII$_{a}^{t}$ & $\frac{aA_-}{\sqrt{2p_0}}$ &
$\frac{-aA_+}{\sqrt{2p_0}}$ & $1$ & $\frac{p-p_0}{-2p_0}$ &
$\frac{\omega q}{-2p_0}$ & $\frac{-aA_-}{\sqrt{2p_0}}$ &
$\frac{\omega q}{-2p_0}$ & $\frac{p+p_0}{2p_0}$ &
$\frac{aA_+}{\sqrt{2p_0}}$
\\ \midrule
III$_{a=1}^{t}$ & $\frac{A_-}{\sqrt{2p_0}}$ &
$\frac{-A_+}{\sqrt{2p_0}}$ & $-1$ & $\frac{p-p_0}{-2p_0}$ &
$\frac{\omega q}{-2p_0}$ & $\frac{-A_-}{\sqrt{2p_0}}$ &
$\frac{\omega q}{-2p_0}$ & $\frac{p+p_0}{2p_0}$ &
$\frac{A_+}{\sqrt{2p_0}}$
\\ \midrule
VI$_{a\neq1}^{t}$ & $\frac{aA_-}{\sqrt{2p_0}}$ &
$\frac{-aA_+}{\sqrt{2p_0}}$ & $-1$ & $\frac{p-p_0}{-2p_0}$ &
$\frac{\omega q}{-2p_0}$ & $\frac{-aA_-}{\sqrt{2p_0}}$ &
$\frac{\omega q}{-2p_0}$ & $\frac{p+p_0}{2p_0}$ &
$\frac{aA_+}{\sqrt{2p_0}}$
\\  \bottomrule
\end{tabularx}
\caption{Time evolution of 3D real Lie algebras. 
Here $p_0=\sqrt{2E}$}
\label{table:Bianchi3}
\end{table}

\section{Quantum counterparts of 3D real Lie algebras}

Let now the harmonic oscillator be quantized, i.e its canonical
coordinates satisfy the CCR
\[
[\q,\q]=0=[\pp,\pp],\quad [\pp,\q]=\frac{\hbar}{i}
\]
Then the classical observables $A_\pm(q,p)$ will be quantized as
well and  their quantum counterparts are denoted by
$\hat{A}_\pm:=A_\pm(\q,\pp)$. As the result, the quantum
counterparts of the 3D real Lie algebras can be
listed as presented in Table \ref{table:Bianchi4}.

\begin{table}[htpb]
\newcolumntype{Y}{>{\centering\arraybackslash}X}
\renewcommand{\tabularxcolumn}[1]{>{\arraybackslash}m{#1}}
\begin{tabularx}{\textwidth}{{>{\hsize=2\hsize}Y>{\hsize=0.75\hsize}Y>{\hsize=0.9\hsize}Y>{\hsize=0.65\hsize}Y>{\hsize=0.85\hsize}Y
>{\hsize=0.8\hsize}Y>{\hsize=0.9\hsize}Y>{\hsize=0.8\hsize}Y>{\hsize=0.75\hsize}Y>{\hsize=0.8\hsize}Y>{\hsize=0.8\hsize}Y}}
\toprule%
Quantum Bianchi type & $\muu_{12}^{1}$ & $\muu_{12}^{2}$ &
$\muu_{12}^{3}$ & $\muu_{23}^{1}$ & $\muu_{23}^{2}$ &
$\muu_{23}^{3}$ & $\muu_{31}^{1}$ & $\muu_{31}^{2}$ &
$\muu_{31}^{3}$
\\\otoprule
I$^{\hbar}$ & 0 & 0 & 0 & 0 & 0 & 0 & 0 & 0 & 0
\\\midrule
II$^{\hbar}$ & 0 & 0 & 0 & $\frac{\hat{p}+p_0}{2p_0}$ &
$\frac{\omega \q}{2p_0}$ & 0 & $\frac{\omega \q}{2p_0}$ &
$\frac{\hat{p}-p_0}{-2p_0}$ & 0
\\  \midrule
VII$^{\hbar}$ & 0 & 0 & 0 & $1$ & 0 & 0 & 0 & $1$ & 0
\\ \midrule
VI$^{\hbar}$ & 0 & 0 & 0 & $\frac{\hat{p}}{p_0}$& $\frac{\omega
\q}{p_0}$ & 0 & $\frac{\omega \q}{p_0}$ & $-\frac{\hat{p}}{p_0}$ & 0
\\ \midrule
IX$^{\hbar}$ & 0 & 0 & $1$ & $1$ & 0 & 0 & 0 & $1$ & 0
\\  \midrule
VIII$^{\hbar}$ & 0 & 0 & $-1$ & $1$ & 0 & 0 & 0 & $1$ & 0
\\ \midrule
V$^{\hbar}$ & $\frac{\A_-}{\sqrt{2p_0}}$ &
$\frac{-\A_+}{\sqrt{2p_0}}$ & 0 & 0 & 0 &
$\frac{-\A_-}{\sqrt{2p_0}}$ & 0 & 0 & $\frac{\A_+}{\sqrt{2p_0}}$
\\ \midrule
IV$^{\hbar}$ & $\frac{\A_-}{\sqrt{2p_0}}$ &
$\frac{-\A_+}{\sqrt{2p_0}}$ & $1$ & 0 & 0 &
$\frac{-\A_-}{\sqrt{2p_0}}$ & 0 & 0 & $\frac{\A_+}{\sqrt{2p_0}}$
\\ \midrule
VII$^{\hbar}_a$ & $\frac{a\A_-}{\sqrt{2p_0}}$ &
$\frac{-a\A_+}{\sqrt{2p_0}}$ & $1$ & $\frac{\hat{p}-p_0}{-2p_0}$ &
$\frac{\omega \q}{-2p_0}$ & $\frac{-a\A_-}{\sqrt{2p_0}}$ &
$\frac{\omega \q}{-2p_0}$ & $\frac{\hat{p}+p_0}{2p_0}$ &
$\frac{a\A_+}{\sqrt{2p_0}}$
\\  \midrule
III$_{a=1}^{\hbar}$ & $\frac{\A_-}{\sqrt{2p_0}}$ &
$\frac{-\A_+}{\sqrt{2p_0}}$ & $-1$ & $\frac{\hat{p}-p_0}{-2p_0}$ &
$\frac{\omega \q}{-2p_0}$ & $\frac{-\A_-}{\sqrt{2p_0}}$ &
$\frac{\omega \q}{-2p_0}$ & $\frac{\hat{p}+p_0}{2p_0}$ &
$\frac{\A_+}{\sqrt{2p_0}}$
\\  \midrule
VI$_{a\neq1}^{\hbar}$ & $\frac{a\A_-}{\sqrt{2p_0}}$ &
$\frac{-a\A_+}{\sqrt{2p_0}}$ & $-1$ & $\frac{\hat{p}-p_0}{-2p_0}$ &
$\frac{\omega \q}{-2p_0}$ & $\frac{-a\A_-}{\sqrt{2p_0}}$ &
$\frac{\omega \q}{-2p_0}$ & $\frac{\hat{p}+p_0}{2p_0}$ &
$\frac{a\A_+}{\sqrt{2p_0}}$
\\ \bottomrule
\end{tabularx}
\caption{Quantum counterparts of 3D real Lie algebras over the harmonic oscillator}
\label{table:Bianchi4}
\end{table}

One can easily check that 
I$^{\hbar}$, II$^{\hbar}$, VII$^{\hbar}$, VI$^{\hbar}$, IX$^{\hbar}$,
VIII$^{\hbar}$ are Lie agebras. Thus, in what follows, we will only focus on the algebras V$^{\hbar}$, IV$^{\hbar}$,
VII$_{a}^{\hbar}$, III$_{a=1}^{\hbar}$, VI$_{a\neq1}^{\hbar}$, and present the latter more compactly in a separate table.

Let $\beta,\gamma,a,b$ be real-valued parameters from Table
\ref{tabel:neli}  and let  $\mathcal{A}^{\hbar}$ denote an entry
from the first column of Table \ref{table:Bianchi4}. 
algebras V$^{\hbar}$, IV$^{\hbar}$, VII$_{a}^{\hbar}$,
III$_{a=1}^{\hbar}$, VI$_{a\neq1}^{\hbar}$ from Table
\ref{table:Bianchi4} can be presented as Table \ref{table:ah}.

\begin{table}[htpb]
\newcolumntype{Y}{>{\centering\arraybackslash}X}
\renewcommand{\tabularxcolumn}[1]{>{\arraybackslash}m{#1}}
\begin{tabularx}{\textwidth}{{>{\hsize=1.4\hsize}Y>{\hsize=0.9\hsize}Y>{\hsize=0.9\hsize}Y>{\hsize=0.9\hsize}Y>{\hsize=0.9\hsize}Y}}
\toprule%
$\mathcal{A}^{\hbar}$ & $\beta$ & $\gamma$ & $a$ & $b$
\\\otoprule
V$^{\hbar}$ & $0$ & $0$ & $1$ &  $0$  \\\midrule
IV$^{\hbar}$ & $0$ & $0$ & $1$ & $1$ \\\midrule
VII$_a^{\hbar}$ & $1$ & $1$ & $a$ & $1$ \\\midrule
III$_{a=1}^{\hbar}$ & $1$ & $1$ & $1$ & $1$  \\\midrule
VI$_{a\neq 1}^{\hbar}$ & $1$ & $1$ & $a\neq 1$ & $-1$
\\\bottomrule
\end{tabularx}
\caption{ Values of $\beta,\gamma,a,b$ for 
quantum algebras $\mathcal{A}^{\hbar}$. 
Here $a>0$}
\label{tabel:neli}
\end{table}

\begin{table}[htpb]
\newcolumntype{Y}{>{\centering\arraybackslash}X}
\renewcommand{\tabularxcolumn}[1]{>{\arraybackslash}m{#1}}
\begin{tabularx}{\textwidth}{{>{\hsize=1.55\hsize}Y>{\hsize=0.8\hsize}Y>{\hsize=0.92\hsize}Y>{\hsize=0.55\hsize}Y>{\hsize=1.45\hsize}Y
>{\hsize=0.85\hsize}Y>{\hsize=0.95\hsize}Y>{\hsize=0.85\hsize}Y>{\hsize=1.18\hsize}Y>{\hsize=0.8\hsize}Y}}
\toprule%
Quantum Bianchi type & $\muu_{12}^{1}$ & $\muu_{12}^{2}$ &
$\muu_{12}^{3}$ & $\muu_{23}^{1}$ & $\muu_{23}^{2}$ &
$\muu_{23}^{3}$ & $\muu_{31}^{1}$ & $\muu_{31}^{2}$ &
$\muu_{31}^{3}$
\\\otoprule
$\mathcal{A}^{\hbar}$ & $\frac{a\Q}{\sqrt{2p_0}}$ &
$\frac{-a\PP}{\sqrt{2\ppp}}$ & $b$ &
$\frac{-\gamma(\pp-\ppp)}{2\ppp}$ & $\frac{-\beta\omega \q}{2\ppp}$
& $\frac{-a\Q}{\sqrt{2\ppp}}$ & $\frac{-\beta\omega \q}{2\ppp}$ &
$\frac{\gamma(\pp+\ppp)}{2\ppp}$ & $\frac{a\PP}{\sqrt{2\ppp}}$
\\ \bottomrule
\end{tabularx}
\caption{$\mathcal{A}^{\hbar}$} 
\label{table:ah}
\end{table}

Let $\mathcal{A}_{HO}$ denote the state space of the quantum
harmonic oscillator and $\{e_1,e_2,\ldots\}$ be its basis. By using
Table \ref{table:ah} we define the structure equations in
$\mathcal{A}_{HO}$ by
\[
[e_i,e_j]_\hbar:=\muu_{ij}^{s} e_s
\]
where the structure operators $\muu_{ij}^{s}$ for $i,j,s\leq3$ are defined by Table \ref{table:ah} and $\muu_{ij}^{s}:=0$ for $i,j,s>3$.
For $x,y\in \mathcal{A}_{HO}$, their quantum multiplication is
defined by
\[
[x,y]_\hbar
:=\hat{\mu}^{i}_{jk} x^{j}y^{k} e_i
=\hat{\mu}^{1}_{jk} x^{j}y^{k} e_1
+\hat{\mu}^{2}_{jk} x^{j}y^{k} e_2
+\hat{\mu}^{3}_{jk} x^{j}y^{k} e_3
\]
where we omitted the trivial terms, because
$\hat{\mu}^{i}_{jk}=0$ for $i>3$.

\section{Jacobi operators}
\label{app:jacobi_calc}

For $x,y,z\in \mathcal{A}_{HO}$, their quantum Jacobi operator is
defined by
\begin{align*}
\hat{J}_\hbar(x;y;z)
&:=[x,[y,z]_\hbar]_\hbar+[y,[z,x]_\hbar]_\hbar+[z,[x,y]_\hbar]_\hbar\\
&\,\,=\hat{J}^1_\hbar(x;y;z)e_1+\hat{J}^2_\hbar(x;y;z)e_2+\hat{J}^3_\hbar(x;y;z)e_3
\end{align*}
where we again omitted the trivial terms, because
$\hat{J}^i_\hbar=0$ for $i>3$. In \cite{PV09-1} the quantum Jacobi
operators were calculated for all real three-dimensional Lie
algebras.

Denote
\begin{equation*}
(x,y,z):=\begin{vmatrix}
 x^{1} & x^{2} & x^{3} \\
 y^{1} & y^{2} & y^{3} \\
z^{1} & z^{2} & z^{3} \\
\end{vmatrix},\quad
\xi_{\pm}:=\beta\omega\hat{q}\hat{A}_{\mp} \pm \gamma(\hat{p} \mp
p_0)\hat{A}_{\pm}
\end{equation*}

\begin{thm}
The Jacobi operator components of $\mathcal{A}^{\hbar}$ read
\begin{align*}
      \hat{J}^{1}_\hbar(x;y;z)&=-\frac{a(x,y,z)}{\sqrt{2\ppp^{3}}}\qxi_+,
      \quad
      \hat{J}^{2}_\hbar(x;y;z)&=-\frac{a(x,y,z)}{\sqrt{2\ppp^{3}}}\qxi_-,
      \quad
      \hat{J}^{3}_\hbar(x;y;z)&=\frac{a^{2}(x,y,z)}{\ppp}[\PP,\Q]
 \end{align*}
\end{thm}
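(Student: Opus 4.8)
The plan is to evaluate the Jacobi operator directly from its definition, using the fact that for an anti-commutative product the Jacobiator is totally antisymmetric in its three arguments. First I would record this antisymmetry: since $\muu^s_{ij}=-\muu^s_{ji}$ the bracket $[\,\cdot\,,\cdot\,]_\hbar$ is anti-commutative, and a one-line check shows that interchanging any two of $x,y,z$ in
\[
\hat J_\hbar(x;y;z)=[x,[y,z]_\hbar]_\hbar+[y,[z,x]_\hbar]_\hbar+[z,[x,y]_\hbar]_\hbar
\]
reverses the overall sign. Combined with trilinearity and the vanishing of every structure operator carrying an index $>3$ (so that the brackets depend only on the $e_1,e_2,e_3$-components of their arguments and take values in $\mathrm{span}\{e_1,e_2,e_3\}$), this shows each scalar component $\hat J^r_\hbar$ is a totally antisymmetric trilinear form on the three-dimensional space $\mathrm{span}\{e_1,e_2,e_3\}$, hence a multiple of the determinant: $\hat J^r_\hbar(x;y;z)=(x,y,z)\,\hat J^r_\hbar(e_1;e_2;e_3)$. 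This reduces the whole theorem to evaluating $\hat J_\hbar(e_1;e_2;e_3)$.

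Next I would expand the three double brackets on basis vectors. From $[e_b,e_c]_\hbar=\muu^n_{bc}e_n$, the $n$-th component of the inner bracket is the operator $\muu^n_{bc}$; feeding this into the defining relation $[x,y]_\hbar=\muu^r_{mn}x^my^ne_r$ with $x=e_a$ gives $[e_a,[e_b,e_c]_\hbar]_\hbar=\muu^r_{an}\muu^n_{bc}e_r$, with the outer structure operator $\muu^r_{an}$ standing to the left of the inner component $\muu^n_{bc}$. Hence the $r$-th component is the operator-ordered Jacobiator
\[
\hat J^r_\hbar(e_1;e_2;e_3)=\sum_{n=1}^{3}\bigl(\muu^r_{1n}\muu^n_{23}+\muu^r_{2n}\muu^n_{31}+\muu^r_{3n}\muu^n_{12}\bigr).
\]
I would then substitute the entries of Table~\ref{table:ah}, using $\muu^s_{ji}=-\muu^s_{ij}$ and $\muu^s_{ii}=0$, and collect terms component by component.

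For $r=3$ the contributions containing $b$ cancel in pairs and the surviving $a^2$-terms combine as $\tfrac{a^2}{2\ppp}(\PP\Q-\Q\PP+\PP\Q-\Q\PP)=\tfrac{a^2}{\ppp}[\PP,\Q]$, so $\hat J^3_\hbar(e_1;e_2;e_3)=\tfrac{a^2}{\ppp}[\PP,\Q]$. For $r=1$ the $\beta\omega\Q\q$ terms cancel and the remainder collapses to $-\tfrac{a}{\ppp\sqrt{2\ppp}}\bigl(\beta\omega\q\Q+\gamma(\pp-\ppp)\PP\bigr)$, which is $-\tfrac{a}{\sqrt{2\ppp^3}}\qxi_+$ once one recognizes the defining combination of $\qxi_+$ and uses $\ppp\sqrt{2\ppp}=\sqrt{2\ppp^3}$; the case $r=2$ is analogous, the $\beta\omega\PP\q$ terms cancelling to leave $-\tfrac{a}{\sqrt{2\ppp^3}}\qxi_-$. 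Multiplying each by $(x,y,z)$ produces the three stated formulas.

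The main obstacle is the bookkeeping of operator ordering. Because the structure ``constants'' are non-commuting operators in $\pp,\q$ (equivalently in $\PP,\Q$), the ordinary Jacobi identity no longer forces the Jacobiator to vanish; the entire content of the theorem lies in the residual commutators. One must therefore keep the outer and inner structure operators in the correct left-to-right order throughout, since reversing them would spuriously annihilate $[\PP,\Q]$ in $\hat J^3_\hbar$ and the $\qxi_\pm$ terms in $\hat J^{1,2}_\hbar$. A secondary point to watch is that $\sqrt{2p_0}$ and $\sqrt{2\ppp}$ denote the same quantity (as $p_0=\ppp$), so the normalizations in $\muu^1_{12},\muu^2_{12}$ match those of the other brackets and the cross terms pair up cleanly.
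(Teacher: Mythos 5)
Your proposal is correct, and it reaches the stated formulas by a genuinely different route than the paper. The paper works with general $x,y,z$ throughout: it expands $[x,y]_\hbar$, $[y,z]_\hbar$, $[z,x]_\hbar$ in terms of the cofactors $\De^{ij}$ of $(x,y,z)$, substitutes into the Jacobiator, and then collapses the eighteen resulting terms using the Laplace-expansion identities (each underbraced sum $x^k\De^{i1}+y^k\De^{i2}+\dots$ equals either $(x,y,z)$ or $0$ according to whether the indices match). You instead observe first that anti-commutativity of the bracket makes each component $\hat J^r_\hbar$ a totally antisymmetric trilinear form in the scalar coordinates $x^j,y^j,z^j$ (the check goes through in the operator-valued setting because it only uses $\muu^k_{mn}=-\muu^k_{nm}$ and the commutativity of the scalar components, not any commutativity among the structure operators), so that $\hat J^r_\hbar(x;y;z)=(x,y,z)\,\hat J^r_\hbar(e_1;e_2;e_3)$, and then evaluate the single operator-ordered sum $\sum_n\bigl(\muu^r_{1n}\muu^n_{23}+\muu^r_{2n}\muu^n_{31}+\muu^r_{3n}\muu^n_{12}\bigr)$ from Table~\ref{table:ah}. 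I verified your three evaluations: for $r=1$ the $\Q\q$ terms cancel leaving $-\tfrac{a}{\ppp\sqrt{2\ppp}}\bigl(\beta\omega\q\Q+\gamma(\pp-\ppp)\PP\bigr)=-\tfrac{a}{\sqrt{2\ppp^3}}\qxi_+$, and the $r=2$ and $r=3$ cases come out as you state. What your approach buys is a drastic reduction in bookkeeping (one basis evaluation instead of the full cofactor expansion) and a conceptual explanation of why the determinant $(x,y,z)$ factors out; what the paper's approach buys is that it never needs the antisymmetry lemma and exhibits the cancellation pattern explicitly for arbitrary arguments. Your emphasis on preserving the left-to-right order of outer and inner structure operators is exactly the right point of care, and it matches the paper's convention $[x,[y,z]_\hbar]^r_\hbar=\muu^r_{jk}x^j[y,z]^k_\hbar$.
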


\begin{proof}
As an example calculate $\hat{J}^1_\hbar(x;y;z)$.
First find the products $[x,y]_\hbar,[y,z]_\hbar,[z,x]_\hbar$ in
$\mathcal{A}^{\hbar}$. Denote $\De:=(x,y,z)$ and let $\De^{ij}$ be the cofactor (signed minor) of the element of $\De$ at $i$-th row and $j$-th column.
Calculate
\begin{align*}
[x,y]_\hbar
&=[x,y]_\hbar^{i}e_i=\muu_{jk}^{i}x^{j}y^{k}e_i\\
&=\left(\muu_{12}^{1}\De^{33} -\muu_{13}^{1}\De^{32} +\muu_{23}^{1}\De^{31} \right)e_1\\
&+\left(\muu_{12}^{2}\De^{33} -\muu_{13}^{2}\De^{32} +\muu_{23}^{2}\De^{31} \right)e_2\\
&+\left(b \De^{33} -\muu_{13}^{3}\De^{32} +\muu_{23}^{3}\De^{31} \right)e_3
\\
&=\left(\frac{a\Q}{\sqrt{2p_0}}\De^{33} -\beta\frac{\omega \q}{2p_0}\De^{32} -\gamma\frac{\hat{p}-p_0}{2p_0}\De^{31} \right)e_1\\
&+\left(\frac{-a\PP}{\sqrt{2p_0}}\De^{33} +\gamma\frac{\hat{p}+p_0}{2p_0}\De^{32} -\beta\frac{\omega \q}{2p_0}\De^{31} \right)e_2\\
&+\left(b \De^{33} +\frac{a\PP}{\sqrt{2p_0}}\De^{32} -\frac{a\Q}{\sqrt{2p_0}}\De^{31} \right)e_3
\end{align*}
In the same way, we can see that
\begin{align*}
[y,z]_\hbar
&=[y,z]_\hbar^{i}e_i=\muu_{jk}^{i}y^{j}z^{k}e_i\\
&=\left(\frac{a\Q}{\sqrt{2p_0}}\De^{13} -\beta\frac{\omega \q}{2p_0}
\De^{12} -\gamma\frac{\hat{p}-p_0}{2p_0}\De^{11} \right)e_1\\
&+\left(\frac{-a\PP}{\sqrt{2p_0}}\De^{13} +\gamma\frac{\hat{p}+p_0}{2p_0}\De^{12} -\beta\frac{\omega \q}{2p_0}\De^{11} \right)e_2\\
&+\left(b \De^{13} +\frac{a\PP}{\sqrt{2p_0}}\De^{12} -\frac{a\Q}{\sqrt{2p_0}}\De^{11} \right)e_3
\end{align*}
and also
\begin{align*}
[z,x]_\hbar
&=[z,x]_\hbar^{i}e_i
=\muu_{jk}^{i}z^{j}x^{j}e_i\\
&\left(\frac{a\Q}{\sqrt{2p_0}}\De^{23}
-\beta\frac{\omega \q}{2p_0}\De^{22}
-\gamma\frac{\hat{p}-p_0}{2p_0}\De^{21} \right)e_1\\
&+
\left(\frac{-a\PP}{\sqrt{2p_0}}\De^{23}
+\gamma\frac{\hat{p}+p_0}{2p_0}\De^{22}
-\beta\frac{\omega \q}{2p_0}\De^{21} \right)e_2\\
&+
\left(b \De^{23}
+\frac{a\PP}{\sqrt{2p_0}}\De^{22}
-\frac{a\Q}{\sqrt{2p_0}}\De^{21} \right)e_3
\end{align*}
Now calculate the first component of the Jacobi operator:
\begin{align*}
\hat{J}^1_\hbar(x;y;z)
&=[x,[y,z]_\hbar]^{1}_\hbar+[y,[z,x]_\hbar]^{1}_\hbar+[z,[x,y]_\hbar]^{1}_\hbar
\\
&=\muu^{1}_{jk}x^{j}[y,z]^{k}_\hbar+\muu^{1}_{jk}y^{j}[z,x]^{k}_\hbar+\muu^{1}_{jk}z^{j}[x,y]^{k}_\hbar\\
&=\muu^{1}_{12}\left(x^{1}[y,z]^{2}_\hbar-x^{2}[y,z]^{1}_\hbar\right)+\muu^{1}_{13}\left(x^{1}[y,z]^{3}_\hbar-x^{3}[y,z]^{1}_\hbar\right)\\&+\muu^{1}_{23}\left(x^{2}[y,z]^{3}_\hbar-x^{3}[y,z]^{2}_\hbar\right)
+\muu^{1}_{12}\left(y^{1}[z,x]^{2}_\hbar-y^{2}[z,x]^{1}_\hbar\right)\\&+\muu^{1}_{13}\left(y^{1}[z,x]^{3}_\hbar-y^{3}[z,x]^{1}_\hbar\right)+\muu^{1}_{23}\left(y^{2}[z,x]^{3}_\hbar-y^{3}[z,x]^{2}_\hbar\right)\\
&+\muu^{1}_{12}\left(z^{1}[x,y]^{2}_\hbar-z^{2}[x,y]^{1}_\hbar\right)+\muu^{1}_{13}\left(z^{1}[x,y]^{3}_\hbar-z^{3}[x,y]^{1}_\hbar\right)\\&+\muu^{1}_{23}\left(z^{2}[x,y]^{3}_\hbar-z^{3}[x,y]^{2}_\hbar\right)\\
&=\frac{a\Q}{\sqrt{2p_0}} \Bigg\{ x^{1} \left(
\frac{-a\PP}{\sqrt{2p_0}}\De^{13}
+\gamma\frac{\hat{p}+p_0}{2p_0}\De^{12} -\beta\frac{\omega
\q}{2p_0}\De^{11} \right)
\\
&\qquad\qquad\quad -
x^{2}
\left(
\frac{a\Q}{\sqrt{2p_0}}\De^{13}
-\beta\frac{\omega \q}{2p_0}\De^{12}
-\gamma\frac{\hat{p}-p_0}{2p_0}\De^{11}
\right)
\Bigg\}
\\
&+
\beta\frac{\omega \q}{2p_0}
\Bigg\{
x^{1}
\left(
b \De^{13}
+\frac{a\PP}{\sqrt{2p_0}}\De^{12}
-\frac{a\Q}{\sqrt{2p_0}}\De^{11}
\right)
\\
&\qquad\qquad\quad -
x^{3}
\left(
\frac{a\Q}{\sqrt{2p_0}}\De^{13}
-\beta\frac{\omega \q}{2p_0}\De^{12}
-\gamma\frac{\hat{p}-p_0}{2p_0}\De^{11}
\right)
\Bigg\}
\\
&-
\gamma\frac{\hat{p}-p_0}{2p_0}
\Bigg\{
x^{2}
\left(
b \De^{13}
+\frac{a\PP}{\sqrt{2p_0}}\De^{12}
-\frac{a\Q}{\sqrt{2p_0}}\De^{11}
\right)
\\
&\qquad\qquad\quad -
x^{3}
\left(
\frac{-a\PP}{\sqrt{2p_0}}\De^{13}
+\gamma\frac{\hat{p}+p_0}{2p_0}\De^{12}
-\beta\frac{\omega \q}{2p_0}\De^{11}
\right)
\Bigg\}
\\
&+
\frac{a\Q}{\sqrt{2p_0}}
\Bigg\{
y^{1}
\left(
\frac{-a\PP}{\sqrt{2p_0}}\De^{23}
+\gamma\frac{\hat{p}+p_0}{2p_0}\De^{22}
-\beta\frac{\omega \q}{2p_0}\De^{21}
\right)
\\
&\qquad\qquad\quad -
y^{2}
\left(
\frac{a\Q}{\sqrt{2p_0}}\De^{23}
-\beta\frac{\omega \q}{2p_0}\De^{22}
-\gamma\frac{\hat{p}-p_0}{2p_0}\De^{21}
\right)
\Bigg\}
\\
&+
\beta\frac{\omega \q}{2p_0}
\Bigg\{
y^{1}
\left(
b \De^{23}
+\frac{a\PP}{\sqrt{2p_0}}\De^{22}
-\frac{a\Q}{\sqrt{2p_0}}\De^{21}
\right)
\\
&\qquad\qquad\quad -
y^{3}
\left(
\frac{a\Q}{\sqrt{2p_0}}\De^{23}
-\beta\frac{\omega \q}{2p_0}\De^{22}
-\gamma\frac{\hat{p}-p_0}{2p_0}\De^{21}
\right)
\Bigg\}
\\
&-
\gamma\frac{\hat{p}-p_0}{2p_0}
\Bigg\{
y^{2}
\left(
b \De^{23}
+\frac{a\PP}{\sqrt{2p_0}}\De^{22}
-\frac{a\Q}{\sqrt{2p_0}}\De^{21}
\right)
\\
&\qquad\qquad\quad -
y^{3}
\left(
\frac{-a\PP}{\sqrt{2p_0}}\De^{23}
+\gamma\frac{\hat{p}+p_0}{2p_0}\De^{22}
-\beta\frac{\omega \q}{2p_0}\De^{21}
\right)
\Bigg\}
\\
&+
\frac{a\Q}{\sqrt{2p_0}}
\Bigg\{
z^{1}
\left(
\frac{-a\PP}{\sqrt{2p_0}}\De^{33}
+\gamma\frac{\hat{p}+p_0}{2p_0}\De^{32}
-\beta\frac{\omega \q}{2p_0}\De^{31}
\right)
\\
&\qquad\qquad\quad -
z^{2}
\left(
\frac{a\Q}{\sqrt{2p_0}}\De^{33}
-\beta\frac{\omega \q}{2p_0}\De^{32}
-\gamma\frac{\hat{p}-p_0}{2p_0}\De^{31}
\right)
\Bigg\}
\\
&+
\beta\frac{\omega \q}{2p_0}
\Bigg\{z^{1}
\left(
b \De^{33}
+\frac{a\PP}{\sqrt{2p_0}}\De^{32}
-\frac{a\Q}{\sqrt{2p_0}}\De^{31}
\right)
\\
&\qquad\qquad\quad -
z^{3}
\left(
\frac{a\Q}{\sqrt{2p_0}}\De^{33}
-\beta\frac{\omega \q}{2p_0}\De^{32}
-\gamma\frac{\hat{p}-p_0}{2p_0}\De^{31}
\right)
\Bigg\}
\\
&-
\gamma\frac{\hat{p}-p_0}{2p_0}
\Bigg\{
z^{2}
\left(
b \De^{33}
+\frac{a\PP}{\sqrt{2p_0}}\De^{32}
-\frac{a\Q}{\sqrt{2p_0}}\De^{31}
\right)
\\
&\qquad\qquad\quad -
z^{3}
\left(
\frac{-a\PP}{\sqrt{2p_0}}\De^{33}
+\gamma\frac{\hat{p}+p_0}{2p_0}\De^{32}
-\beta\frac{\omega\q}{2p_0}\De^{31}
\right)
\Bigg\}
\end{align*}

Now open the parentheses and rearrange the terms. Then we have
\begin{align*}
\hat{J}^1_\hbar(x;y;z)=&-\frac{a\Q}{\sqrt{2p_0}}
\frac{a\PP}{\sqrt{2p_0}} \underbrace{ \left( x^{1}\De^{13} +
y^{1}\De^{23} + z^{1}\De^{33}
\right)}_0\\
&+
\frac{a\Q}{\sqrt{2p_0}} \gamma\frac{\hat{p}+p_0}{2p_0}
\underbrace{
\left(
x^{1}\De^{12} + y^{1}\De^{22} +z^{1}\De^{32}%
\right)}_0\\
&-
\frac{a\Q}{\sqrt{2p_0}} \beta\frac{\omega \q}{2p_0}
\underbrace{
\left(
x^{1}\De^{11} + y^{1}\De^{21} + z^{1}\De^{31}%
\right)}_{(x,y,z)} \\
&-
\frac{a\Q}{\sqrt{2p_0}} \frac{a\Q}{\sqrt{2p_0}}
\underbrace{
\left(
x^{2}\De^{13} + y^{2}\De^{23} + z^{2}\De^{33}%
\right)}_0 \\
&+
\frac{a\Q}{\sqrt{2p_0}}  \beta\frac{\omega \q}{2p_0}
\underbrace{
\left(
x^{2}\De^{12} + y^{2}\De^{22} + z^{2}\De^{32}%
\right)}_{(x,y,z)} \\
&+
\frac{a\Q}{\sqrt{2p_0}} \gamma\frac{\hat{p}-p_0}{2p_0}
\underbrace{
\left(
x^{2}\De^{11} + y^{2}\De^{21} +z^{2}\De^{31}%
\right)}_0\\
&+
 \beta\frac{\omega \q}{2p_0} b
\underbrace{
\left(
x^{1}\De^{13} + y^{1}\De^{23} +z^{1}\De^{33}%
\right)}_0  \\
&+
 \beta\frac{\omega \q}{2p_0} \frac{a\PP}{\sqrt{2p_0}}
\underbrace{
\left(
x^{1}\De^{12} + y^{1}\De^{22} +z^{1}\De^{32}%
\right)}_0 \\
 &-
 \beta\frac{\omega \q}{2p_0} \frac{a\Q}{\sqrt{2p_0}}
\underbrace{
\left(
x^{1}\De^{11} + y^{1}\De^{21} +z^{1}\De^{31}%
\right)}_{(x,y,z)}  \\
&-
 \beta\frac{\omega \q}{2p_0} \frac{a\Q}{\sqrt{2p_0}}
\underbrace{
\left(
x^{3}\De^{13} + y^{3}\De^{23} +z^{3}\De^{33}%
\right)}_{(x,y,z)}\\
&+
 \beta\frac{\omega \q}{2p_0} \beta\frac{\omega \q}{2p_0}
\underbrace{
\left(
x^{3}\De^{12} + y^{3}\De^{22} +z^{3}\De^{32}%
\right)}_0  \\
&+
 \beta\frac{\omega \q}{2p_0} \gamma\frac{\hat{p}-p_0}{2p_0}
\underbrace{
\left(
x^{3}\De^{11} + y^{3}\De^{21} +z^{3}\De^{31}%
\right)}_0  \\
&-
\gamma\frac{\hat{p}-p_0}{2p_0} b
\underbrace{
\left(
x^{2}\De^{13} + y^{2}\De^{23} +z^{2}\De^{33}
\right)}_0\\
&-
\gamma\frac{\hat{p}-p_0}{2p_0} \frac{a\PP}{\sqrt{2p_0}}
\underbrace{
\left(
x^{2}\De^{12} + y^{2}\De^{22} +z^{2}\De^{32}
\right)}_{(x,y,z)} \\
&+
\gamma\frac{\hat{p}-p_0}{2p_0} \frac{a\Q}{\sqrt{2p_0}}
\underbrace{
\left(
x^{2}\De^{11} + y^{2}\De^{21} +z^{2}\De^{31}
\right)}_0\\
&-
\gamma\frac{\hat{p}-p_0}{2p_0} \frac{a\PP}{\sqrt{2p_0}}
\underbrace{
\left(
x^{3}\De^{13} + y^{3}\De^{23} +z^{3}\De^{33}
\right)}_{(x,y,z)} \\
&+
\gamma\frac{\hat{p}-p_0}{2p_0} \gamma\frac{\hat{p}+p_0}{2p_0}
\underbrace{
\left(
x^{3}\De^{12} + y^{3}\De^{22} +z^{3}\De^{32}
\right)}_0 \\
&-
\gamma\frac{\hat{p}-p_0}{2p_0} \beta\frac{\omega \q}{2p_0}
\underbrace{
\left(
x^{3}\De^{11} + y^{3}\De^{21} +z^{3}\De^{31}
\right)}_0\\
&-\frac{a(x,y,z)}{\sqrt{2\ppp^{3}}}
(\beta\omega\hat{q}\hat{A}_{ -} + \gamma(\hat{p} - p_0)\hat{A}_{+})
\end{align*}
The remaining components $\hat{J}^2_\hbar(x;y;z)$ and
$\hat{J}^3_\hbar(x;y;z)$ can be calculated in the same way.
\end{proof}

\subsection*{Acknowledgement}

The research was in part supported by the Estonian Science Foundation, Grant ETF-6912.

\noindent
Department of Mathematics, Tallinn University of Technology\\
Ehitajate tee 5, 19086 Tallinn, Estonia\\
E-mails: eugen.paal@ttu.ee, jvirkepu@staff.ttu.ee
\end{document}